\newtheorem{theorem}{Theorem}
\newtheorem{lemma}{Lemma}
\newtheorem{proposition}{Proposition}
\definecolor{erdem}{rgb}{0,0,0}
\title{Performance Gains of Optimal Antenna Deployment for Massive MIMO Systems\vspace{-10pt}}
\author{    \IEEEauthorblockN {Erdem Koyuncu    }
    \IEEEauthorblockA {
        Department of Electrical and Computer Engineering, University of Illinois at Chicago \vspace{-20pt}
    }}
\begin{document}
\maketitle
\begin{abstract}
We consider the uplink of a single-cell multi-user multiple-input multiple-output (MIMO) system with several single antenna transmitters/users and one base station with $N$ antennas in the $N\rightarrow\infty$ regime. The base station antennas are evenly distributed to $n$ admissable locations throughout the cell. 

First, we show that a reliable (per-user) rate of $O(\log n)$ is achievable through optimal locational optimization of base station antennas. We also prove that an $O(\log n)$ rate is the best possible. Therefore, in contrast to a centralized or circular deployment, where the achievable rate is at most a constant, the rate with a general deployment can grow logarithmically with $n$, resulting in a certain form of ``macromultiplexing gain.'' 

Second, using tools from high-resolution quantization theory, we derive an accurate formula for the best achievable rate given any $n$ and any user density function. According to our formula, the dependence of the optimal rate on the user density function $f$ is curiously only through the differential entropy of $f$. In fact, the optimal rate decreases linearly with the differential entropy, and the worst-case scenario is a uniform user density. Numerical simulations confirm our analytical findings. 
\end{abstract}
\begin{IEEEkeywords}
Multiple-antenna systems, massive MIMO, distributed antenna systems, locational optimization, quantization.
\end{IEEEkeywords}
\vspace{-10pt}
\section{Introduction}
\label{secIntro}
Massive multiple-input multiple-output (MIMO) is a recently-proposed \cite{marzetta1} promising method that can greatly improve the energy efficiency, spectral efficiency, and coverage of cellular networks \cite{swindletutorial}. The defining feature of massive MIMO is the large number of base station antennas, which naturally provides spatially-orthogonal channels to different users. This enables interference-free inter-cell communication.

The existence of many antennas comes with the opportunity of distributing them to different geographical locations. The idea of utilizing such a distributed antenna system (DAS) as opposed to a colocated antenna system (CAS) predates massive MIMO or even ``regular'' MIMO itself, and goes back to at least \cite{saleh_dist_antennas}. Another early work \cite{kjkerpez} studies the signal-to-noise ratios (SNRs) and the signal-to-interference ratios (SIRs) for a DAS where each antenna transmits the same message. Due to reduced distance between the users and the base station, a DAS can significantly improve coverage and power-efficiency, and even reduce intra-cell interference \cite{wanchoi1}.

With the advent of multi-user MIMO and massive MIMO techniques, DAS design and analysis have drawn renewed interest. Achievable rates for a circular DAS and the corresponding optimal antenna radii have been studied in \cite{j_gan_circular_das2, j_gan_circular_das,yindi_jing_circular_das, kamga1}. These works show that using circular DAS can greatly improve upon a CAS in terms of the sum rate. However, the improvement is only a fixed constant and does not grow asymptotically with the number of distinct antenna locations. Lloyd-algorithm based numerical approaches to antenna location optimization can be found in \cite{wang_ant_loc_opt,wang_ant_loc_opt2}. Another numerical approach to optimal antenna placement based on stochastic approximation theory can be found in \cite{firouz1}. The performance of circular antenna arrays with or without a central antenna has been studied in \cite{circular_das_with_central_antenna}. In \cite{lin_dai}, upper bounds on the single-user capacity and approximations to the sum capacity have been obtained for both randomly-distributed and colocated antenna systems. It is demonstrated that distributed antennas can provide significant gains over colocated antennas, especially when the users have channel state information (CSI). In \cite{liu_antenna}, antenna locations are optimized to maximize a lower bound on the energy efficiency of a cellular DAS with two types of users. In \cite{zhang}, the authors provide deterministic approximations for the sum-rate of massive MIMO uplink with distributed and possibly-correlated antennas.

Most of the above work have focused on a fixed antenna topology, such as a circular DAS. Clearly, a better performance can be achieved by considering base station antennas in general position. However, in a massive MIMO system, allowing all the base station antennas to be in different locations may not be feasible due to costs of  backhaul implementation and land. One novelty of this work is to consider a model where $N$ base station antennas are evenly distributed to a fixed number $n$ of distinct locations in the $N\rightarrow\infty$ regime. This allows us to obtain the benefits of a distributed massive MIMO system, while avoiding potentially-high implementation costs. 

Also, most existing work on antenna location optimization relies on numerical methods. To the best of our knowledge, when the base station antennas of a massive MIMO system are allowed arbitrary positions, a formal analysis of the benefits of antenna location optimization is not available in the literature. In this work, we present such an analysis with a particular focus on determining the achievable per-user rates. 

The rest of this paper is organized as follows: In Section \ref{secSystemModel}, we introduce the system model. In Section \ref{secAchievability}, we present rigorous bounds on the achievable rates with optimized antenna locations. In Section \ref{secHighRes}, we provide a quantization-theoretical analysis of the best possible rates. In Section \ref{secNumerical}, we provide simulation results. Finally, in Section \ref{secConclusions}, we draw our main conclusions. Due to space limitations, some of the technical proofs are provided in the extended version \cite{massivemimotechrep} of this paper.

{\it Notation:} $\|\cdot\|$ is the Frobenius norm. $A^k$ is the $k$th Cartesian power and $|A|$ is the cardinality of a set $A$. Given $b\in\mathbb{R}^d$ and $A \subset \mathbb{R}^d$, we let $A + b \triangleq \{x + b:x\in A\}$. $A-b$ and $bA$ are similar. Given sets $A$ and $B$, $A\backslash B$ is the set of elements of $A$ that are not in $B$. $\mathcal{CN}(0,\alpha)$ is a circularly-symmetric complex Gaussian random variable with variance $\frac{\alpha}{2}$ per real dimension.  $\log(\cdot)$ is the natural logarithm, $\Gamma(\cdot)$ is the Gamma function. For a logical statement $T$, we let $\mathbf{1}(T) = 1$ is $T$ is true, and $\mathbf{1}(T) = 0$, otherwise. $\mu(A)$ is the Lebesgue measure of a set $A$. For $x\in\mathbb{R}$, we let $x^+ \triangleq x \mathbf{1}(x \geq 0)$. $O(\cdot)$ and $o(\cdot)$ are the standard Bachmann-Landau symbols. For sequences $a_n,\,b_n$, the notation $a_n \sim b_n$ means $\lim_{n\rightarrow\infty} \frac{a_n}{b_n} = 1$.
\section{System Model}
\label{secSystemModel}
We consider a single-cell multi-user distributed MIMO system with $N$ base station antennas. The base station antennas are divided to $n$ groups, where each group consists of $\frac{N}{n}$ antennas. We assume that $N$ is a multiple of $n$. This ensures that the number of antennas at each group is an integer. 

Given $i\in\{1,\ldots,n\}$, suppose that the antennas in Group $i$ are all located at the same point $x_i \in \mathbb{R}^d$, where the ambient dimension $d$ is a positive integer (In practice, one is usually interested in the case $d\in\{1,2,3\}$.). Therefore, the $N$ base station antennas are geographically distributed to $n$ co-located antenna groups, where each group consists of $\frac{N}{n}$ antennas. The practical motivation for such a colocated-distributed architecture has been described in Section \ref{secIntro}.

We study an uplink scenario where $m$ single-antenna users simultaneously wish to communicate with the base station. Specifically, given $j\in\{1,\ldots,m\}$, User $j$ wishes to communicate the complex Gaussian symbol $s_j \sim \mathcal{CN}(0,1)$ to the base station. Following the massive MIMO literature, we assume that each user transmits with normalized power $P/N$. Thus, User $j$ transmits the signal $s_j \sqrt{P/N}$ over its antenna. 

Given $i\!\in\!\{1,\ldots,n\}$, $j\in\{1,\ldots,m\}$, and $\ell\in\{1,\ldots,\frac{N}{n}\}$, let $h_{ji\ell}\in\mathbb{C}$ be the channel gain between User $j$ and Antenna $\ell$ of Group $i$. We assume that
$h_{jil} \sim \mathcal{CN}(0,\|x_i - u_j\|^{-r})$, where $r \geq 0$ is the path-loss exponent, and  $u_j\in\mathbb{R}^d$ is the location of User $j$.  The channel input-output relationships are $y_{i\ell} = \sum_{j=1}^m h_{ji\ell} s_j \sqrt{P/N} + \tau_{i \ell}$, where $\tau_{i\ell} \sim \mathcal{CN}(0,1)$ is the noise at the $\ell$th antenna of Group $i$. We assume that all the channel gains, noises, and the data symbols are independent. 

We consider the massive MIMO regime where the number of base station antennas $N$ grows to infinity. Meanwhile, the number of users $m$ and the number of antenna groups $n$ remain fixed $N$-independent constants. Under such conditions, using zero forcing at the base station, a reliable data rate of 
\begin{align}
\label{aksjdlak020483}
\log\biggl( 1+ P \lim_{N\rightarrow\infty} \frac{1}{N} \sum_{i = 1}^{n} \sum_{\ell = 1}^{\frac{N}{n}} \|x_i - u_j\|^{-r} \biggr) \frac{\mathrm{nats}}{\mathrm{sec}\cdot\mathrm{Hz}}
\end{align}
is achievable for User $j$, as $N\rightarrow\infty$ \cite[Proposition 1]{yindi_jing_circular_das}. Noting that the summand in (\ref{aksjdlak020483}) is independent of $\ell$, the achievable rate in (\ref{aksjdlak020483}) for a generic user at location $u$ simplifies to
\begin{align}
\label{aksjdlak0204832}
R(u,\mathcal{X}) \triangleq \log\left( 1+  \frac{P}{n} \sum_{i = 1}^{n} \frac{1}{\|x_i - u\|^r} \right),
\end{align}
where $ \mathcal{X} \triangleq \{x_1,\ldots,x_n\}$ is the set of the distinct locations of base station antennas. The achievabilities of (\ref{aksjdlak020483}) or (\ref{aksjdlak0204832}) require the typical assumptions of full CSI at the base station and perfect cooperation among the base station antennas. 

We model the user location as a random variable with a certain probability density function $f$ over the cell. We also assume that the cell shape $S$ is bounded. Thus, $S \subset [0,M]^d$, where $M \geq 0$ is finite, and the density $f$ vanishes outside $S$. Given a deployment $\mathcal{X}$ of base station antennas, the average rate of a typical user can then be expressed as
\begin{align}
\label{arofeks}
R(\mathcal{X}) & \triangleq \int_{S} R(u,\mathcal{X}) f(u) \mathrm{d}u \\
\label{actualrakakakeks} & = \int_S \log\left( 1+  \frac{P}{n} \sum_{i = 1}^{n} \frac{1}{\|u - x_i\|^r} \right) f(u)\mathrm{d}u
\end{align}

Our goal is to find the optimal deployment of antennas that maximize the average rate (In order to avoid unnecessary technicalities, we assume that such an optimal deployment exists.). In other words, we wish to determine
\begin{align}
\label{oieroqiuewiquweq}
R_n \triangleq \max_{\mathcal{X}} R(\mathcal{X}),
\end{align}
and the structure of deployments that can achieve (\ref{oieroqiuewiquweq}). 

\section{Bounds on the Best Per-User Rates}
\label{secAchievability}
We first present the achievable performance gains through an optimal deployment of antennas. Clearly, the best per-user rate $R_n$ is a monotonically non-decreasing function of $n$. This leads to the lower bound $R_n \geq R_1$, which is achievable when all the base station antennas are colocated to a certain unique optimal location. A natural starting point may be to improve upon this best-possible rate $R_1$ with colocated antennas.

It has been previously established \cite{j_gan_circular_das2, j_gan_circular_das,yindi_jing_circular_das, kamga1} that a circular array of antennas can improve upon the rate $R_1$. Unfortunately, the improvement is only an additive constant. We will show in the following that $R_n$ grows at least proportionally to $\log n$. Therefore, the per-user rate can, in fact, be arbitrarily large.

As also suggested several times in different contexts (see, e.g., \cite{lin_dai}), the idea is to distribute the antenna groups ``evenly'' throughout the cell. For example, in two dimensions, we can position the $n$ groups of base station antennas in a uniform square lattice so that, for any user at location $u$, there exists a group of base station antennas that is at most $O(n^{-\frac{1}{2}})$-far. According to (\ref{aksjdlak0204832}), this guarantees a rate of $O(\log(1+Pn^{\frac{r}{2}-1}))$, regardless of the location of the user. It follows that for $r > 2$, the per-user rate is unbounded as $n\rightarrow\infty$, growing at least as $(\frac{r}{2}-1)\log n$. In contrast, for colocated and circular topologies, the nearest base station antenna is $O(1)$-apart for a positive fraction of user locations, which leads to a bounded rate.

Extension of the above idea of a uniform square lattice to any dimension leads to the following result.

\begin{proposition}
\label{achievabilityprop}
For any $n \geq 1$, we have
\begin{align}
R_n \geq  \log\left(1 + \frac{P n^{\frac{r}{d}-1}}{ M^r} \right).
\end{align}
\end{proposition}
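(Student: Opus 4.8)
The plan is to exhibit one explicit deployment $\mathcal{X}^\star$ of $n$ antenna groups whose average rate already meets the claimed bound; since $R_n=\max_{\mathcal{X}}R(\mathcal{X})$, any such $\mathcal{X}^\star$ suffices. Because $f$ is a probability density supported on $S\subset[0,M]^d$, it is in fact enough to establish the pointwise estimate $R(u,\mathcal{X}^\star)\geq\log\bigl(1+Pn^{r/d-1}/M^r\bigr)$ for every $u\in[0,M]^d$: integrating this against $f(u)\,\mathrm{d}u$ leaves the right-hand side unchanged while the left-hand side becomes $R(\mathcal{X}^\star)$. By the monotonicity of $t\mapsto\log(1+t)$ and by retaining only a single term of the sum in (\ref{aksjdlak0204832}), this reduces further to arranging that $\min_i\|u-x_i^\star\|\leq Mn^{-1/d}$ for all $u\in[0,M]^d$ --- that is, that $\mathcal{X}^\star$ covers the bounding box $[0,M]^d\supseteq S$ by balls of radius $Mn^{-1/d}$; indeed, $\|u-x_i^\star\|^{-r}\geq (Mn^{-1/d})^{-r}=n^{r/d}/M^r$ for the nearest $i$ then gives $R(u,\mathcal{X}^\star)\geq\log\bigl(1+\tfrac{P}{n}\cdot\tfrac{n^{r/d}}{M^r}\bigr)$, which is exactly the target.

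For the covering construction, suppose first that $n=k^d$ for an integer $k$. Partition $[0,M]^d$ into $n$ congruent sub-cubes of side $M/k=Mn^{-1/d}$ and let $\mathcal{X}^\star$ be the set of their centers. Every $u\in[0,M]^d$ lies in one sub-cube and is within Euclidean distance $\tfrac{\sqrt d}{2}\cdot\tfrac{M}{k}$ of that sub-cube's center; for $d\leq 4$ one has $\tfrac{\sqrt d}{2}\leq 1$, so this distance is at most $Mn^{-1/d}$ and the reduction above applies verbatim. (This is the $d$-dimensional version of the uniform square lattice described before the proposition.) When $n$ is not a perfect $d$-th power, the natural fix is to set $k=\lfloor n^{1/d}\rfloor$, place one antenna group at the center of each of the $k^d\leq n$ sub-cubes of side $M/k$, and place the remaining $n-k^d$ groups at arbitrary admissible locations --- extra nonnegative terms in (\ref{aksjdlak0204832}) can only help --- after which a direct estimate finishes the argument for all but possibly a few small values of $n$.

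The technical heart, and the main obstacle to a fully uniform statement, is precisely these edge cases. First, for $n$ not a perfect $d$-th power one cannot tile $[0,M]^d$ into $n$ congruent cubes, and a generic partition of the cube into $n$ low-diameter pieces introduces a dimension-dependent constant; obtaining the clean constant $M^r$ for every $n$ therefore requires either the leftover-antenna device above together with a careful small-$n$ check, or a non-cubic (slab-type or quadtree/space-filling-curve) decomposition, or the reduction $R_n\geq R_{\lfloor n^{1/d}\rfloor^d}$ via monotonicity of $R_n$ (which is clean for $r\leq d$ but points the wrong way when $r>d$). Second, for $d\geq 5$ the sub-cube-center covering radius $\tfrac{\sqrt d}{2}Mn^{-1/d}$ exceeds the target $Mn^{-1/d}$, so one must either refine the lattice of antenna locations or keep several of the nearest terms of the sum in (\ref{aksjdlak0204832}), at the price of a constant depending on $d$ and $r$. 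In the practically relevant regime $d\in\{1,2,3\}$ neither complication arises, and the sub-cube construction yields the proposition directly.
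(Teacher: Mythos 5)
Your construction is the same uniform-lattice covering idea the paper itself describes just before the proposition (place the groups on a cubic grid, keep only the nearest term of the sum in (\ref{aksjdlak0204832})), and your core reduction is sound: a pointwise covering radius of $Mn^{-1/d}$ does yield the claimed rate after integrating against $f$. The problem is that, as written, your argument only establishes the proposition when $n$ is a perfect $d$-th power and $d\le 4$ (plus all $n$ when $d=1$), while the statement is claimed for every $n\ge 1$; the two ``edge cases'' you defer are genuine gaps rather than routine checks. In particular, the leftover-antenna patch does not work as described. Take $d=2$, $n=3$: your deployment has one group at the center of $[0,M]^2$ and two groups that play no role in the covering, so the guaranteed SNR term is at most $P\,2^{r/2}/M^{r}$ (even if all three groups sat at the center), whereas the target $P\,3^{r/2}/(3M^{r})$ is strictly larger once $(3/2)^{r/2}>3$, i.e.\ $r\gtrsim 5.4$. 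The bound itself is still true there, but only via a configuration that uses all three points for covering (e.g.\ three vertical slabs of width $M/3$ with centers as antenna locations give covering radius $\tfrac{M}{2}\sqrt{1+\tfrac19}\approx 0.53M<M/\sqrt3$), so ``extra antennas anywhere plus a direct estimate'' cannot close the general-$n$ case; you must actually exhibit an $n$-dependent covering or a partition of $[0,M]^d$ into $n$ cells of volume $M^d/n$ and small enough diameter.

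The high-dimensional gap is even more structural than you suggest. For $d\ge 13$ the unit-ball volume $V_d$ is below $1$, so $n$ balls of radius $Mn^{-1/d}$ have total volume $V_dM^d<M^d$ and \emph{no} deployment can satisfy your pointwise requirement $\min_i\|u-x_i^\star\|\le Mn^{-1/d}$ for all $u\in[0,M]^d$. Hence in high dimension the clean constant $M^r$ cannot be obtained by a purely pointwise covering argument at all; one has to exploit the average over $u$, for instance via convexity of $v\mapsto\log\bigl(1+\tfrac{P}{n}v^{-r/d}\bigr)$ in the volume variable $v=t^d$ together with a bound on $\int_S\min_i\|u-x_i\|^d f(u)\,\mathrm{d}u$, rather than refining the lattice or keeping more terms of the sum as you propose. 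In short: right approach and correct in the practically relevant regime ($d\le 4$, $n$ a perfect $d$-th power, and all $n$ for $d=1$), but the proposition in the generality stated is not proved by your argument.
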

\begin{proof}
The proof is provided in \cite[Appendix A]{massivemimotechrep}.
\end{proof}

According to Proposition \ref{achievabilityprop}, we can observe that $R_n$ grows at least as $(\frac{r}{d}-1)^+\log n + O(1)$. Therefore, the optimal per-user rate grows at least logarithmically with the number of admissable antenna locations, indicating an unbounded rate improvement over centralized or circular antenna arrangements. A fundamental question is then whether the $O(\log n)$ growth of the rate is the best possible. The affirmative answer is provided by the following result.

\begin{proposition}
\label{converseprop}
Let $\eta \triangleq \mu(S) \sup_{u\in S} f(u)$. We have
\begin{align}
R_n \leq  \frac{\eta r}{d} \log n + \eta\log\left(1+\frac{P}{(\mu(S))^\frac{r}{d}}\right) + 2\eta r,\,\forall n\geq 1.
\end{align}
\end{proposition}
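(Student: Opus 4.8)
The plan is to strip the quantity $R_n=\max_{\mathcal X}R(\mathcal X)$ down to a one-dimensional integral that depends on the deployment $\mathcal X$ only through the measure of the set of users close to \emph{some} antenna, and then to bound that measure uniformly over all $\mathcal X$. First, in (\ref{actualrakakakeks}) I would replace $f(u)$ by its supremum $\sup_{u\in S}f(u)=\eta/\mu(S)$, giving $R(\mathcal X)\le\frac{\eta}{\mu(S)}\int_S\log\bigl(1+\frac Pn\sum_{i=1}^n\|u-x_i\|^{-r}\bigr)\,\mathrm{d}u$. Then --- and this is the one genuinely important choice --- I would bound the interference sum by its largest term, $\frac1n\sum_{i=1}^n\|u-x_i\|^{-r}\le D(u)^{-r}$ with $D(u)\triangleq\min_{1\le i\le n}\|u-x_i\|$, so that $R(\mathcal X)\le\frac{\eta}{\mu(S)}\int_S\log\bigl(1+PD(u)^{-r}\bigr)\,\mathrm{d}u$. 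It is worth noting that the tempting alternative $1+\frac Pn\sum_i a_i\le\prod_i(1+\frac Pn a_i)$ is useless here: after integration it yields a bound growing linearly, rather than logarithmically, in $n$.

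Next I would evaluate the remaining integral by the layer-cake formula, $\int_S\log(1+PD(u)^{-r})\,\mathrm{d}u=\int_0^\infty\mu\{u\in S:D(u)<\delta(t)\}\,\mathrm{d}t$ with $\delta(t)\triangleq(P/(e^t-1))^{1/r}$, and bound the superlevel sets uniformly in $\mathcal X$: since $\{D<\delta\}\subseteq\bigcup_{i=1}^nB(x_i,\delta)$, subadditivity gives $\mu\{u\in S:D(u)<\delta\}\le\min\{n\omega_d\delta^d,\mu(S)\}$, where $\omega_d$ is the volume of the unit Euclidean ball in $\mathbb R^d$. The map $t\mapsto n\omega_d\delta(t)^d$ decreases from $+\infty$ to $0$ and crosses $\mu(S)$ at the single point $t^\star=\log\bigl(1+P(n\omega_d/\mu(S))^{r/d}\bigr)$, so the integral splits as $\mu(S)\,t^\star+n\omega_dP^{d/r}\int_{t^\star}^\infty(e^t-1)^{-d/r}\,\mathrm{d}t$. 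The tail piece is handled by the substitution $s=e^t-1$ and $\frac1{s+1}\le\frac1s$, which gives $\int_{t^\star}^\infty(e^t-1)^{-d/r}\,\mathrm{d}t\le\frac rd(e^{t^\star}-1)^{-d/r}=\frac rd P^{-d/r}\mu(S)/(n\omega_d)$; multiplying everything by $\eta/\mu(S)$ yields $R(\mathcal X)\le\eta\,t^\star+\frac{\eta r}{d}$ for every deployment, hence for $R_n$.

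It then remains to massage this into the advertised form. Writing $t^\star=\log\bigl(1+P\mu(S)^{-r/d}\,\omega_d^{r/d}\,n^{r/d}\bigr)$ and using $n\ge1$ together with the crude estimate $\omega_d\le e^{2d-1}$ (valid for every $d\ge1$), one verifies the elementary inequality $1+P\mu(S)^{-r/d}\omega_d^{r/d}n^{r/d}\le n^{r/d}\bigl(1+P\mu(S)^{-r/d}\bigr)e^{r(2-1/d)}$, which rearranges to $\eta\,t^\star+\frac{\eta r}{d}\le\frac{\eta r}{d}\log n+\eta\log\bigl(1+P/\mu(S)^{r/d}\bigr)+2\eta r$, i.e.\ the claim. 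Two remarks on validity: $t^\star\ge0$ for every $n\ge1$, so the splitting --- and the whole argument --- holds for all $n\ge1$ rather than only asymptotically; and the degenerate cases $P=0$ or $r=0$ are absorbed using $\eta=\mu(S)\sup_{u\in S}f(u)\ge\int_Sf(u)\,\mathrm{d}u=1$.

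I expect the only real obstacle to be the second step: one must resist expanding the interference sum multiplicatively and instead keep solely the nearest antenna, after which the computation is the standard covering/high-resolution estimate and the remaining work is just constant-chasing to land exactly on $\frac{\eta r}{d}\log n+\eta\log(1+P/\mu(S)^{r/d})+2\eta r$.
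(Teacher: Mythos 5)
Your proof is correct: the chain of reductions (bounding $f$ by its supremum, keeping only the nearest antenna, the layer-cake integral with the ball-covering bound $\mu\{D<\delta\}\le\min\{n\omega_d\delta^d,\mu(S)\}$, and the tail estimate giving $\eta t^\star+\frac{\eta r}{d}$) all check out, and the final constant-chasing via $\omega_d\le e^{2d-1}$ lands exactly on the stated bound for every $n\ge 1$, with the degenerate cases $r=0$ and $P=0$ properly absorbed by $\eta\ge 1$. This is essentially the same covering-type converse argument that underlies the paper's bound, so no further comparison is needed.
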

\begin{proof}
The proof is provided in \cite[Appendix B]{massivemimotechrep}.
\end{proof}
Together, Propositions \ref{achievabilityprop} and \ref{converseprop} reveal that $R_n$ grows exactly logarithmically with the number of antenna locations $n$. This motivates us to capture the constant multiplicative factor of $\log n$ that appears in the expression of $R_n$ through the quantity
\begin{align}
\label{macromultiplexing}
\rho \triangleq \lim_{n\rightarrow\infty} \frac{R_n}{\log n}.
\end{align}
We have assumed that the limit exists for a simpler exposition. 

From a mathematical point of view, the definition in (\ref{macromultiplexing}) is very similar to the definition of the multiplexing gain of a regular MIMO system. The only difference is that (\ref{macromultiplexing}) describes the factor of $\log n$ in the data rate formula, where as the multiplexing gain usually refers to the factor of $\log P$. The quantity $\rho$ can also be considered as a certain form of macrodiversity gain as it emerges due the macroscale (much larger than carrier wavelength) separation of antennas. Since, to the best of our knowledge, there is no unique established definition of a macrodiversity gain, and since the nature of $\rho$ is very similar to that of the multiplexing gain, we refer to $\rho$ as the macromultiplexing gain of the system.

The macromultiplexing gain provides a precise asymptotic characterization of the gains that are achievable through locational optimization of antennas. The following theorem, which immediately follows from Propositions \ref{achievabilityprop} and \ref{converseprop}, describes the bounds on $\rho$ that we have obtained so far.
\begin{theorem}
\label{theoremi}
For any user density $f$, we have
\begin{align}
\label{pquwepuqwpe}
\left(\frac{r}{d}-1\right)^+ \leq \rho \leq \frac{\eta r}{d},
\end{align}
where $\eta = \mu(S)\sup_{u\in S}f(u)$, and $S\subset\mathbb{R}^d$ is the cell.
\end{theorem}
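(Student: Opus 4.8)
The plan is to derive both bounds on $\rho$ directly from Propositions \ref{achievabilityprop} and \ref{converseprop} by dividing the corresponding inequalities by $\log n$ and letting $n\rightarrow\infty$; since we have assumed in (\ref{macromultiplexing}) that the limit $\rho=\lim_{n\to\infty}R_n/\log n$ exists, these limiting manipulations are immediate (otherwise the same steps would bound $\liminf_n R_n/\log n$ from below and $\limsup_n R_n/\log n$ from above, which would suffice just as well).

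For the lower bound, I would start from Proposition \ref{achievabilityprop}, which gives $R_n\geq\log\bigl(1+Pn^{\frac{r}{d}-1}/M^r\bigr)$, and split into two regimes. When $\frac{r}{d}>1$, the argument $Pn^{\frac{r}{d}-1}/M^r$ diverges, so $\log\bigl(1+Pn^{\frac{r}{d}-1}/M^r\bigr)\sim\bigl(\frac{r}{d}-1\bigr)\log n$, and dividing by $\log n$ and taking the limit yields $\rho\geq\frac{r}{d}-1$. When $\frac{r}{d}\leq1$, the right-hand side of Proposition \ref{achievabilityprop} stays bounded (it tends to $\log(1+P/M^r)$ if $r=d$, and to $0$ if $r<d$), so this bound alone only gives $\rho\geq0$; but this still suffices because $R(u,\mathcal{X})\geq\log 1=0$ pointwise, hence $R_n\geq0$ for all $n$, hence $\rho\geq0$. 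Combining the two regimes gives $\rho\geq\bigl(\frac{r}{d}-1\bigr)^+$ in all cases.

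For the upper bound, I would invoke Proposition \ref{converseprop}, namely $R_n\leq\frac{\eta r}{d}\log n+\eta\log\bigl(1+P/(\mu(S))^{\frac{r}{d}}\bigr)+2\eta r$. Dividing by $\log n$, the two $n$-independent terms vanish in the limit, leaving $\rho\leq\frac{\eta r}{d}$. Putting the two halves together establishes (\ref{pquwepuqwpe}).

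There is no genuine obstacle here, as the statement is a corollary of the two propositions; the only point deserving a line of care is the weak-path-loss regime $r<d$, where the uniform-square-lattice construction underlying Proposition \ref{achievabilityprop} contributes no growth, so one must explicitly fall back on the elementary bound $R_n\geq0$ to conclude $\rho\geq\bigl(\frac{r}{d}-1\bigr)^+=0$.
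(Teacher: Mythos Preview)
Your proposal is correct and matches the paper's approach exactly: the paper states that Theorem~\ref{theoremi} ``immediately follows from Propositions~\ref{achievabilityprop} and~\ref{converseprop},'' and you have simply written out the routine division by $\log n$ and passage to the limit that makes this immediate. Your extra care in the regime $r\leq d$ is a welcome detail that the paper leaves implicit.
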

Note that $\eta \geq 1$ with equality if $f$ is uniform on $S$. Thus, for a uniform $f$, the upper and lower bounds on $\rho$ differ by only unity. In general, depending on how ``non-uniform'' $f$ is, the gap between the bounds may be larger than $1$. It does not seem straightforward to close this gap even in the case of a uniform distribution, at least by completely formal means. On the other hand, our analysis in the next section suggest that, in fact, $\rho = \left(\frac{r}{d}-1\right)^+$,  regardless of the user density.
 
\section{Analysis of the Best Per-User Rates using high-resolution Quantization Theory}
\label{secHighRes}
The bounds on $R_n$ in Propositions \ref{achievabilityprop} and \ref{converseprop} do not provide much insight on the exact value of $R_n$ for a given finite $n$, as they are not tight in general. Also, despite the fact that the bounds provide the asymptotic formula $R_n= \rho \log n + O(1)$ via Theorem \ref{theoremi}, they are not strong enough to reveal the exact value of $\rho$, or the nature of $O(1)$ term.

The goal of this section is to find an accurate formula for $R_n$. To achieve accuracy, we utilize certain approximations that hold for large values of the design parameters $n$ and $r$. Each step of approximation is described in the following together with the corresponding justifications and analysis. 

\subsection{Reduction to a Quantization Problem}
\label{secReduction}
The first step involves approximating the cost function in (\ref{actualrakakakeks}) via another function that allows a quantization-theoretical approach. For this purpose, we begin by noting the bound
\begin{align}
\sum_{i = 1}^{n} \frac{1}{\|u - x_i\|^r} \geq \frac{1}{\min_{i\in\{1,\ldots,n\}}\|u - x_i\|^r}
\end{align}
that holds for any $u$ and $\mathcal{X}$. When searching for an optimal set of locations, our idea is to use the approximation
\begin{align}
\label{poqwiepoqiwepoqw}
\sum_{i = 1}^{n} \frac{1}{\|u - x_i\|^r} \simeq \frac{1}{\min_{i\in\{1,\ldots,n\}}\|u - x_i\|^r}.
\end{align}
Clearly, (\ref{poqwiepoqiwepoqw}) is potentially inaccurate for an arbitrary $\mathcal{X}$. For example, if all $x_i$ are equal, then the left hand side of (\ref{poqwiepoqiwepoqw}) is $n$ times the right hand side. However, as also discussed in the beginning of Section \ref{secAchievability}, we expect the optimal antenna locations to not be superimposed, but instead be evenly distributed throughout the cell $S$ through some lattice-like structure. For such scenarios, the following lemma shows that (\ref{poqwiepoqiwepoqw}) is, in fact, an asymptotic equality as $r\rightarrow\infty$.

\begin{lemma}
\label{latticelemma}
Let $\Lambda = \{\beta B x:x\in\mathbb{Z}^d\}$ be a lattice in $\mathbb{R}^d$, where $\beta > 0$, and $B\in\mathbb{R}^{d\times d}$ has unit determinant. For $r>d$, there is a constant $K \geq 1$ that is independent of $\beta$, and a set $A$ of measure zero such that for every $u\in\mathbb{R}^d \backslash A$, we have
\begin{align}
\label{lattlemma1}
\sum_{x\in \Lambda} \frac{1}{\|u - x\|^r} \leq \frac{K}{\min_{x\in \Lambda} \|u - x\|^r}.
\end{align}
Moreover, for every $u\in\mathbb{R}^d \backslash A$, we have
\begin{align}
\label{lattlemma2}
\sum_{x\in \Lambda} \frac{1}{\|u - x\|^r} \sim \frac{1}{\min_{x\in \Lambda} \|u - x\|^r} \mbox{ as } r\rightarrow\infty.
\end{align}
\end{lemma}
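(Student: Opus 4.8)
The plan is to rescale the problem to a $\beta$-free one and then bound the single scalar quantity $g_r(u)\triangleq\delta(u)^r\sum_{x\in\Lambda}\|u-x\|^{-r}=\sum_{x\in\Lambda}\bigl(\delta(u)/\|u-x\|\bigr)^r$, where $\delta(u)\triangleq\min_{x\in\Lambda}\|u-x\|$; indeed, $(\ref{lattlemma1})$ is exactly $g_r(u)\le K$ and $(\ref{lattlemma2})$ is $g_r(u)\to 1$ as $r\to\infty$. First I would use scale invariance: writing $x=\beta Bz$ and $u=\beta Bv$, each ratio $\|u-x\|/\delta(u)$ equals $\|B(v-z)\|/\min_{w\in\mathbb{Z}^d}\|B(v-w)\|$, so $g_r(u)$ does not depend on $\beta$, and it suffices to treat $\Lambda=B\mathbb{Z}^d$; any bound obtained this way is then automatically $\beta$-independent. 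By periodicity of the sum under $v\mapsto v+\mathbb{Z}^d$ one may further assume $v$ lies in a fundamental domain. I would take the exceptional set $A$ to be $\Lambda$ together with the countable (hence null) union of the boundaries of the Voronoi cells of $\Lambda$; for $u\notin A$ the closest lattice point $x^*$ is unique and $\delta(u)>0$.

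For $(\ref{lattlemma1})$ I would split $\Lambda$ into three pieces by distance from $x^*$. Let $\kappa$ be the covering radius of $\Lambda$, so $\delta(u)\le\kappa$ for every $u$. The point $x^*$ contributes exactly $1$ to $g_r(u)$. Every $x$ with $\|x-x^*\|\ge 2\kappa$ satisfies $\|u-x\|\ge\|x-x^*\|-\delta(u)\ge\tfrac{1}{2}\|x-x^*\|$, so the contribution of these ``far'' points is at most $\sum_{0\ne w\in\Lambda,\,\|w\|\ge 2\kappa}(2\kappa/\|w\|)^r$, a lattice sum that is finite precisely because $r>d$ (the number of lattice points in a ball of radius $t$ is $O(t^d)$, so the tail is dominated by $\int^{\infty} t^{d-1-r}\mathrm{d}t$) and that depends only on $\Lambda$ and $r$. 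The remaining ``near'' points, those $x\ne x^*$ with $\|x-x^*\|<2\kappa$, form a finite set whose cardinality is a constant fixed by $\Lambda$, and each contributes at most $1$ since $\|u-x\|\ge\delta(u)$. Adding the three bounds gives $g_r(u)\le K$ with $K\ge1$, independent of $u$ and $\beta$.

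For $(\ref{lattlemma2})$ I would fix any $r_0>d$. For $r\ge r_0$ each summand obeys $\bigl(\delta(u)/\|u-x\|\bigr)^r\le\bigl(\delta(u)/\|u-x\|\bigr)^{r_0}$ because $\delta(u)/\|u-x\|\le1$, and $\sum_{x\in\Lambda}\bigl(\delta(u)/\|u-x\|\bigr)^{r_0}\le K$ by $(\ref{lattlemma1})$ at exponent $r_0$. Dominated convergence for series then permits passing $r\to\infty$ term by term; since $\delta(u)/\|u-x\|$ equals $1$ only for $x=x^*$ (this is where uniqueness of the nearest point, i.e.\ $u\notin A$, enters) and is strictly less than $1$ otherwise, the limit of $g_r(u)$ is $1$, which is $(\ref{lattlemma2})$.

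I expect the ``near'' points to be the main obstacle: terms with $\|u-x\|$ close to, but not equal to, $\delta(u)$ admit no geometric-decay estimate, since $\delta(u)/\|u-x\|$ can be arbitrarily close to $1$. The observation that rescues the argument is that, once the scale $\beta$ is divided out, only finitely many lattice points lie within $2\kappa$ of $x^*$, and their number is a constant fixed by the lattice shape, so bounding each such term crudely by $1$ suffices. A secondary technical point is the interchange of limit and summation in $(\ref{lattlemma2})$, which is handled exactly by the uniform-in-$r$ domination furnished by $(\ref{lattlemma1})$ at a single fixed exponent $r_0>d$.
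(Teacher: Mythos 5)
Your argument is correct and complete: the rescaling to the $\beta$-free quantity $\sum_{x\in\Lambda}\bigl(\delta(u)/\|u-x\|\bigr)^{r}$, the split around the (a.e.\ unique) nearest lattice point into the nearest term, finitely many near terms each bounded by $1$, and a far tail summable precisely because $r>d$, together with the dominated-convergence passage to the limit for the asymptotic claim, are all sound and yield a $K$ independent of $\beta$ and $u$. This is essentially the same natural route as the paper's own proof, which appears only in the extended technical report.
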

\begin{proof}
The proof is provided in \cite[Appendix C]{massivemimotechrep}.
\end{proof}
As a result of (\ref{lattlemma2}), for antenna locations $\mathcal{X}$ of interest, we may approximate $R(\mathcal{X})$ in (\ref{actualrakakakeks}) via
\begin{align}
\label{oiquweoiqwueoiquweoiquweq}
R(\mathcal{X})\! \simeq \! \int_S \! \log\left( 1\!+\!  \frac{P}{n} \frac{1}{\min_{1\leq i \leq n}\|u - x_i\|^r} \right) f(u)\mathrm{d}u,
\end{align}
We expect (\ref{oiquweoiqwueoiquweoiquweq}) to be tight as $r\rightarrow\infty$. For a finite $r$, according to (\ref{lattlemma1}), we expect the approximation error to be at most a constant $n$-independent rate. Numerical results in Section \ref{secNumerical} suggest that the error is negligible even for small $r$.

Second, as also shown in the proof of Proposition \ref{achievabilityprop} in \cite[Appendix A]{massivemimotechrep}, the term $n \min_{1\leq i \leq n} \|u-x_i\|^r$ in (\ref{oiquweoiqwueoiquweoiquweq}) can grow at fast as $O(n^{\frac{r}{d}-1})$, which means that the ``$1+$'' term inside the logarithm will be insignificant. As a result, we have
\begin{align}
\label{oiquweoiqwueoiquweoiquweq2}
R(\mathcal{X}) \simeq  r \widetilde{R}(\mathcal{X}) + \log\frac{P}{n},
\end{align}
where
\begin{align}
\label{qopmcn3}
 \widetilde{R}(\mathcal{X}) \triangleq \int_S \log \frac{1}{\min_{i\in\{1,\ldots,n\}}\|u - x_i\|}  f(u)\mathrm{d}u.
\end{align}
Maximizing $ \widetilde{R}(\mathcal{X})$ over all $\mathcal{X}$ as $\widetilde{R}_n \!\triangleq\! \max_{\mathcal{X}} \widetilde{R}(\mathcal{X})$, we obtain
\begin{align}
\label{poqwjepqoweq}
R_n \simeq r \widetilde{R}_n + \log\frac{P}{n}.
\end{align}
We now make a slight digression into a different but very relevant problem. Given $\mathcal{X} = \{x_1,\ldots,x_n\}$ as usual, let 
\begin{align}
\label{classicalqtheory}
 \overline{R}(\mathcal{X}) \triangleq \int_S \min_{i\in\{1,\ldots,n\}}\|u - x_i\|^r  f(u)\mathrm{d}u.
\end{align}
Determining the performance and structure of the minimizers of $\overline{R}(\mathcal{X})$ is the fundamental problem of quantization theory and has been the subject of many publications \cite{quantization}. Exact solutions are out of reach for a given arbitrary $n$ and density $f$, but tight asymptotic results are known for the so-called high-resolution regime $n\rightarrow\infty$, see. e.g. \cite{zador1}.

The problem of maximizing $\widetilde{R}(\mathcal{X})$ in (\ref{qopmcn3}) is very similar to the problem of minimizing $\overline{R}(\mathcal{X}) $ in (\ref{classicalqtheory}). The only difference is that in (\ref{qopmcn3}), one considers a monotonic function (logarithm) of the minimum of norms instead of a mere minimum of $r$th powers of norms in (\ref{classicalqtheory}). It is thus natural to expect that the methods and ideas for minimizing (\ref{classicalqtheory}) can successfully be applied for maximizing (\ref{qopmcn3}). In the following, we first consider a uniform user density over a cubic cell $S = [0,M]^d,\,M>0$. We shall then proceed to a general non-uniform density $f$.

\subsection{Solution to the Quantization Problem: Uniform Case}
\label{secHighRes1}
We begin with some definitions that will be useful for our discussion in this section. Given $\mathcal{X} = \{x_1,\ldots,x_n\}$, let
\begin{multline}
V_i(\mathcal{X}) \triangleq \bigl\{u \in S:\|u - x_i\| \leq \|u - x_j\|,\, \\ \forall j\in\{1,\ldots,n\}\bigr\},\,i\in\{1,\ldots,n\}
\end{multline}
denote the Voronoi cells that are generated by $\mathcal{X}$. We have
\begin{align}
\label{voronoidecomposition}
\widetilde{R}(\mathcal{X}) = \sum_{i=1}^n \int_{V_i} \log \frac{1}{\|u - x_i\|} f(u) \mathrm{d}u
\end{align}

Also, given $A\subset\mathbb{R}^d$ with centroid $c(A) \!\triangleq\! \frac{1}{\mu(A)}\int_A u \mathrm{d}u$, let
\begin{align}
\label{logmoment}
\zeta(A) = \frac{1}{\mu(A)} \int_{A-c(A)} \log \frac{1}{\|u\|} \mathrm{d}u + \frac{1}{d}\log \mu(A)
\end{align}
denote the normalized logarithmic moment of $A$. The moment $\zeta(A)$ is normalized in the sense that $ \zeta (\alpha A) = \zeta(A),\,\forall\alpha > 0$. In particular, we let $\zeta_{b,d} \triangleq \zeta([0,1]^d)$ and $\zeta_{s,d} \triangleq \zeta(\{x:\|x\| \leq 1 \})$ and  denote the normalized logarithmic moment of the $d$-dimensional ball and the $d$-dimensional cube, respectively. 

By using, for example, generalized spherical coordinates, it can be shown that for any $ d\geq 1$, we have
$\zeta_{b,d} = \frac{1}{d}\log \frac{e \pi^{d/2}}{\Gamma(1+\frac{d}{2})}$. It does not appear as straightforward to obtain a general formula for $\zeta_{s,d}$. Still, we can calculate the special cases $\zeta_{s,1}  = \log (2e)$ and $\zeta_{s,2} = \frac{1}{4}(6 + 2\log 2 - \pi)$. 

One important consequence of the moment definition in (\ref{logmoment}) is provided by the following proposition.
\begin{proposition}
\label{interpreterprop}
Suppose that the Voronoi cells $V_i(\mathcal{X})$ generated by $\mathcal{X}$ are all congruent to a certain polytope $T$, and each Voronoi cell has the same $d$-dimensional volume. Then, 
\begin{align}
\label{interpreter}
\widetilde{R}(\mathcal{X}) = \frac{1}{d}\log \frac{n}{M^d} +   \zeta (T).
\end{align}
\end{proposition}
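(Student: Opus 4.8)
The plan is to compute $\widetilde R(\mathcal X)$ cell by cell via the Voronoi decomposition (\ref{voronoidecomposition}), use the congruence of the cells to collapse every summand into one universal integral, and then recognise that integral as $\zeta(T)$ after accounting for the cell volume. Since we are in the uniform setting of Section \ref{secHighRes1}, the density is $f \equiv 1/M^d$ on $S = [0,M]^d$, so (\ref{voronoidecomposition}) reads $\widetilde R(\mathcal X) = M^{-d}\sum_{i=1}^n \int_{V_i}\log\frac{1}{\|u - x_i\|}\,\mathrm{d}u$. In the $i$-th term I would substitute $v = u - x_i$ and then exploit congruence. For the deployments this proposition is aimed at (those of lattice type), each cell is centrally symmetric about its generator, so $x_i = c(V_i)$; more generally one should read the hypothesis as already containing $x_i = c(V_i)$. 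Then $V_i - x_i = V_i - c(V_i)$ is the image of $T - c(T)$ under a rigid motion that fixes the origin, hence under an orthogonal map $Q_i$, and since $\|Q_i w\| = \|w\|$ and $|\det Q_i| = 1$ the substitution $v = Q_i w$ gives, for every $i$,
\begin{align}
\int_{V_i}\log\frac{1}{\|u - x_i\|}\,\mathrm{d}u = \int_{T - c(T)}\log\frac{1}{\|w\|}\,\mathrm{d}w ,
\end{align}
one and the same finite number (finite because $\log\frac{1}{\|\cdot\|}$ is locally integrable on $\mathbb{R}^d$). Equivalently, this is the observation that $\zeta$ is invariant under every similarity of $\mathbb{R}^d$, so $\zeta(V_i) = \zeta(T)$ for all $i$.

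Next comes the volume bookkeeping. The cells $\{V_i\}_{i=1}^n$ tile $S$ up to a set of measure zero and each has volume $\mu(T)$, so $n\,\mu(T) = \mu(S) = M^d$, i.e.\ $n/M^d = 1/\mu(T)$ and $\mu(T) = M^d/n$. Summing the displayed identity over $i$ and dividing by $M^d$ therefore yields
\begin{align}
\widetilde R(\mathcal X) = \frac{1}{\mu(T)}\int_{T - c(T)}\log\frac{1}{\|w\|}\,\mathrm{d}w = \zeta(T) - \frac{1}{d}\log\mu(T) = \zeta(T) + \frac{1}{d}\log\frac{n}{M^d},
\end{align}
where the middle equality is just the definition (\ref{logmoment}) of $\zeta(T)$ solved for its integral term and the last uses $\mu(T) = M^d/n$. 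This is exactly (\ref{interpreter}).

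The substitutions, the local integrability of $\log\frac{1}{\|\cdot\|}$, and the additive volume identity are all routine. The one genuinely delicate point is the congruence step, and inside it the identification of the antenna $x_i$ (the Voronoi generator) with the centroid $c(V_i)$ of its cell: a rigid motion sending $T$ onto $V_i$ need not carry a prescribed interior ``generator point'' of $T$ to $x_i$ unless that point happens to be the centroid, because $\zeta$ is built from $T - c(T)$ rather than from $T$ positioned around its generator. I expect this to be the main obstacle to a fully general statement, but it disappears for the lattice-type configurations used later, where the fundamental Voronoi region is symmetric about the lattice point and the two notions of centre coincide.
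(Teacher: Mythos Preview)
Your argument is correct and is exactly the approach the paper has in mind: the paper's proof is the single line ``follows immediately from (\ref{voronoidecomposition}) and (\ref{logmoment}),'' and you have simply written out that immediate computation in full. Your flag about needing $x_i = c(V_i)$ is a genuine subtlety that the paper's statement leaves implicit; as you note, it is automatic for the lattice-type deployments to which the proposition is actually applied.
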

\begin{proof}
The result follows immediately from (\ref{voronoidecomposition}) and  (\ref{logmoment}). \!\!\!\!\!
\end{proof}
Proposition \ref{interpreterprop} will be very useful for interpreting the conclusions of the following result.

\begin{proposition}
\label{approxunifdim2prop}
Let $f(u) = \frac{1}{M^d}\mathbf{1}(u\in[0,M^d])$. For the case of one dimension $d=1$, for every $n \geq 1$, we have
\begin{align}
\label{generalasymptoticformulad1} \widetilde{R}_n = \log \frac{n}{M} +   \zeta_1^{\star},
\end{align}
where $\zeta_1^{\star} = \zeta_{b,1}  = \zeta_{s,1} = \log(2e) = 1.693147\cdots$ 
is the normalized logarithmic moment of the interval.

 For $d \geq 2$, as $n\rightarrow\infty$, we have
\begin{align}
\label{generalasymptoticformula} \widetilde{R}_n = \frac{1}{d}\log \frac{n}{M^d} +   \zeta_d^{\star} + o(1),
\end{align}
for some constant $\zeta_d^{\star}$ that satisfies
\begin{align}
\label{pwqoepoqwiewoqieqw}
\zeta_{s,d} \leq \zeta_d^{\star} \leq \zeta_{b,d}
\end{align}
and depends only on the dimension. In particular,
\begin{align}
\label{keowkeowke}
\zeta_2^{\star} = \frac{3}{2} - \frac{\pi}{2\sqrt{3}} + \frac{1}{4} \log\frac{27}{4} = 1.070485\cdots.
\end{align}
 is the normalized logarithmic moment of the regular hexagon.
\end{proposition}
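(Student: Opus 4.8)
The plan is to handle $d=1$ by exact optimization, obtain matching constructive and rearrangement bounds in every dimension, and then identify $\zeta_d^\star$ as the supremal normalized logarithmic moment of a space‑filling cell. Starting from the Voronoi decomposition (\ref{voronoidecomposition}), I would first establish, for each $i$,
\[
\int_{V_i(\mathcal{X})} \log\frac{1}{\|u-x_i\|}\,\mathrm{d}u \;\le\; \mu(V_i)\left(\zeta_{b,d} - \frac{1}{d}\log\mu(V_i)\right),
\]
which follows from the bathtub (Hardy--Littlewood rearrangement) principle: among all sets of a fixed finite measure, the integral of the radially decreasing function $u\mapsto\log\frac{1}{\|u-x_i\|}$ is maximized by the ball centered at $x_i$, whose normalized logarithmic moment equals $\zeta_{b,d}$ by scale invariance of $\zeta$. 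Summing over $i$, writing $p_i \triangleq \mu(V_i)/M^d$, and using convexity of $t\log t$ (equivalently $-\sum_i p_i\log p_i\le\log n$) gives $\widetilde{R}_n \le \frac{1}{d}\log\frac{n}{M^d} + \zeta_{b,d}$ for every $n$ and $d$. When $d=1$ this is tight: partitioning $[0,M]$ into $n$ equal intervals and placing each $x_i$ at the corresponding midpoint makes the Voronoi cells exactly those intervals, so (\ref{voronoidecomposition}) and scale invariance yield $\widetilde{R}_n \ge \frac{1}{d}\log\frac{n}{M} + \zeta([0,1])$; since $\zeta([0,1])=\zeta_{b,1}=\zeta_{s,1}=\log(2e)$ by direct integration, (\ref{generalasymptoticformulad1}) follows with equality throughout.

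\emph{Lower bound and existence of $\zeta_d^\star$.} For $d\ge 2$ I would first get $\widetilde{R}_n \ge \frac{1}{d}\log\frac{n}{M^d} + \zeta_{s,d} + o(1)$ from the cubic‑lattice construction: with $k=\lfloor n^{1/d}\rfloor$, partition $[0,M]^d$ into $k^d$ congruent subcubes, place $x_i$ at each subcube center, and put the remaining $n-k^d$ points anywhere (which can only increase $\widetilde{R}(\mathcal{X})$, since enlarging $\mathcal{X}$ pointwise decreases $\min_i\|u-x_i\|$); the $\frac{1}{d}\log(k^d/n)\to 0$ loss accounts for the $o(1)$. Next, to promote the resulting inequalities to the clean asymptotic (\ref{generalasymptoticformula}), I would show (for $M=1$, the general $M$ following by scaling) that $c_n\triangleq\widetilde{R}_n-\frac{1}{d}\log n$ converges, by a Fekete‑type argument: subdividing $[0,1]^d$ into $\ell^d$ congruent subcubes and inserting a $\frac{1}{\ell}$‑scaled copy of an optimal $n$‑point configuration in each shows $\widetilde{R}_{\ell^d n}\ge\frac{1}{d}\log(\ell^d)+\widetilde{R}_n$, i.e.\ $c_{\ell^d n}\ge c_n$; combining this with the uniform bound $c_n\le\zeta_{b,d}$ and the ``extra points only help'' monotonicity, and approximating a general $m$ by $\ell^d n$ with $\ell=\lfloor(m/n)^{1/d}\rfloor$, gives $\liminf_m c_m \ge c_n$ for every fixed $n$, hence $\lim_m c_m=\sup_n c_n=:\zeta_d^\star$ exists, depends only on $d$, and satisfies (\ref{pwqoepoqwiewoqieqw}).

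\emph{The planar value.} For (\ref{keowkeowke}) I would identify $\zeta_2^\star$ with the normalized logarithmic moment of the regular hexagon. The bound $\zeta_2^\star\ge\zeta(\mathrm{hexagon})$ comes from replacing the square lattice above by the hexagonal tiling into cells of area $M^2/n$: the $\sim n$ tiles meeting $[0,M]^2$ have centers that serve as (most of) the $x_i$, an interior tile is its own Voronoi cell, and the $O(\sqrt{n})$ boundary tiles contribute $o(1)$ to $\widetilde{R}_n-\frac{1}{2}\log\frac{n}{M^2}$. The reverse bound $\zeta_2^\star\le\zeta(\mathrm{hexagon})$ is the crux, and would use Fejes T\'oth's moment lemma: for a convex polygon $W$ of area $a$ with at most six vertices and any point $z$, $-\int_W\log\|u-z\|\,\mathrm{d}u \le a\bigl(\zeta(\mathrm{hexagon})-\tfrac{1}{2}\log a\bigr)$. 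Since a planar Voronoi diagram of $n$ points in a bounded region has at most $6n+O(\sqrt n)$ edges in total by Euler's formula, cells with more than six sides are outnumbered by cells with fewer; using the analogous lemma for regular $k$‑gons together with the monotonicity and concavity in $k$ of the normalized logarithmic moment of the regular $k$‑gon of a given area, one can bound $\widetilde{R}(\mathcal{X})\le\frac{1}{2}\log\frac{n}{M^2}+\zeta(\mathrm{hexagon})+o(1)$ uniformly in $\mathcal{X}$, exactly as in the classical analysis of optimal two‑dimensional quantization \cite{quantization}. Finally, $\zeta(\mathrm{hexagon})$ is evaluated by a direct polar‑coordinate integration over one of the twelve congruent right triangles composing a regular hexagon, giving the stated $\frac{3}{2}-\frac{\pi}{2\sqrt{3}}+\frac{1}{4}\log\frac{27}{4}$.

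I expect this last upper bound — transferring Fejes T\'oth's lemma from individual $\le 6$‑gons to a Voronoi partition whose cells have varying numbers of sides and varying areas — to be the main obstacle; it is precisely the step that renders the two‑dimensional quantization constant computable while its higher‑dimensional analogues remain open, which is why (\ref{generalasymptoticformula}) for $d\ge 3$ only asserts the sandwich (\ref{pwqoepoqwiewoqieqw}) rather than an exact value.
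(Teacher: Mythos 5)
Your architecture is sound and is essentially the route the statement itself telegraphs (and that the paper's appendix follows): a cube-lattice construction giving the $\zeta_{s,d}$ side of (\ref{pwqoepoqwiewoqieqw}), a ball-rearrangement bound giving the $\zeta_{b,d}$ side, exact matching of the two in $d=1$, a self-similarity argument pinning down a dimension-only constant, and Fejes T\'oth-type moment results identifying the hexagon in $d=2$. The pieces you spell out are correct: the bathtub bound per Voronoi cell plus the entropy step $-\sum_i p_i\log p_i\le\log n$ does give $\widetilde{R}_n\le\frac{1}{d}\log\frac{n}{M^d}+\zeta_{b,d}$ for all $n$; the midpoint codebook makes both steps tight in one dimension, proving (\ref{generalasymptoticformulad1}) exactly; the $\lfloor n^{1/d}\rfloor^d$-cube construction plus monotonicity of $\widetilde{R}$ under adding points gives the lower bound with an $o(1)$ loss; and your Fekete/tiling inequality $\widetilde{R}_{\ell^d n}\ge\frac{1}{d}\log(\ell^d)+\widetilde{R}_n$, combined with the uniform upper bound and the $\ell=\lfloor(m/n)^{1/d}\rfloor$ interpolation, correctly yields convergence of $\widetilde{R}_n-\frac{1}{d}\log\frac{n}{M^d}$ to $\sup_n c_n=\zeta_d^\star\in[\zeta_{s,d},\zeta_{b,d}]$.

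Two caveats on the planar constant (\ref{keowkeowke}), which you rightly call the crux but dispatch a bit quickly. First, the moment lemma is classically stated for nondecreasing nonnegative moment functions; $\log t$ is unbounded below near $0$, so you should note the truncation $\max(\log t,-K)$ and monotone convergence (harmless, since the logarithmic moment is integrable in the plane). Second, and more substantively, your reduction of the many-sided-cell issue to ``Euler gives average side count at most six, then use monotonicity and concavity in $k$'' is not by itself a proof: after applying the per-cell $k$-gon lemma you face $\sum_i a_i\bigl(\tau_{\nu_i}-\tfrac12\log a_i\bigr)$, an \emph{area-weighted} average over side counts, whereas Euler's formula controls only the unweighted average $\frac{1}{n}\sum_i\nu_i$; decoupling these (and handling the boundary cells of the square) is exactly the technical content of Fejes T\'oth's theorem on sums of moments and its rigorous treatments (Newman, Gruber), which is what should be invoked rather than a Jensen step in $k$ alone. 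With that classical theorem cited in place of your sketch, and the routine boundary bookkeeping in your hexagonal-tiling lower bound filled in, the argument is complete.
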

\begin{proof}
The proof is provided in \cite[Appendix D]{massivemimotechrep}.
\end{proof}

Let us first consider the case of one dimension $d = 1$. According to Proposition \ref{interpreterprop}, the performance in (\ref{generalasymptoticformulad1}) is achieved by the uniform quantizer codebook 
\begin{align}
\label{oqpwiepqowie}
\left\{\frac{M}{2n}, \frac{3M}{2n},\ldots,\frac{(2n-1)M}{2n}\right\}, 
\end{align}
whose Voronoi cells intervals of length $M/n$. Since, according to (\ref{oiquweoiqwueoiquweoiquweq2}), maximizers of $\widetilde{R}(\cdot)$  approximately coincide with the maximizers of $R(\cdot)$, the codebook (\ref{oqpwiepqowie}) also corresponds to the approximately optimal set of antenna locations for the case of one dimension and uniform distribution of users. Substituting (\ref{generalasymptoticformulad1}) to (\ref{poqwjepqoweq}), we obtain
\begin{align}
\label{qp33woeupqweu}
R_n \simeq  (r-1)\log n + \log \frac{P}{M^r}  + r\zeta_1^{\star}.
\end{align}
For $d=1$ and a uniform distribution of users, we thus conclude that the approximately-optimal antenna configuration is (\ref{oqpwiepqowie}), and the corresponding best per-user rate is (\ref{qp33woeupqweu}).

For more than one dimension, the asymptotic behavior of $\widetilde{R}_n$ is provided by (\ref{generalasymptoticformula}). In particular, for two dimensions, the constant $\zeta_d^{\star}$ in (\ref{generalasymptoticformula}) can be determined to be the normalized logarithmic moment of the regular hexagon, as shown in (\ref{keowkeowke}). Correspondingly, a hexagonal arrangement of antenna groups is asymptotically optimal for maximizing $\widetilde{R}(\cdot)$. By (\ref{oiquweoiqwueoiquweoiquweq2}), we can then argue that a hexagonal arrangement is approximately optimal for maximizing the per-user rate.

For three or more dimensions, there is uncertainty regarding the value of $\zeta_d^{\star}$, and only the upper and lower bounds in (\ref{pwqoepoqwiewoqieqw}) are known. The optimal arrangement of antenna groups also remains to be determined. 
Still, for $d \geq 3$, the constant $\zeta_d^{\star} $ can be estimated by matching the performance of a numerically-designed maximizer of $\widetilde{R}(\cdot)$ to (\ref{generalasymptoticformula}) at some large $n$. 

\subsection{Solution to the Quantization Problem: Non-Uniform Case}
\label{secHighRes2}
We now work out the case of a general density $f$ using the idea of point density functions of high-resolution quantization theory \cite[Section IV]{quantization}. Specifically, given any set of antenna locations $\mathcal{X}$, consider the existence of a point density function $\lambda : \mathbb{R}^d \rightarrow \mathbb{R}_{\geq 0}$ such that the cube $[x,x+\mathrm{d}x]$ contains $n\lambda(x)\mathrm{d}x$ antenna locations and $\int_S \lambda(x)\mathrm{d}x = 1$. Since $f$ is approximately uniform on $[x,x+\mathrm{d}x]$, the average per-user rate conditioned on the users being in $[x,x+\mathrm{d}x]$ is then
\begin{align}
\frac{1}{d}\log (n\lambda(x)) +  \zeta_d^{\star} + o(1),
\end{align}
according to (\ref{generalasymptoticformula}). Averaging out the user density, we obtain
\begin{align}
\widetilde{R}(\mathcal{X}) = \frac{1}{d}\log n +   \zeta_d^{\star}+ \frac{1}{d}\int_S f(x) \log \lambda(x) \mathrm{d}x + o(1).
\end{align}
Maximizing $\widetilde{R}(\mathcal{X})$ over all $\mathcal{X}$ is equivalent to maximizing $\widetilde{R}(\mathcal{X})$ over all point density functions $\lambda(x)$ subject to the normalization $\int_S \lambda(x)\mathrm{d}x = 1$. Let
\begin{align}
H(f) \triangleq \int_S f(x) \log \frac{1}{f(x)} \mathrm{d}x
\end{align}
denote the differential entropy of the user density. According to Gibbs' inequality, we have
\begin{align}
\label{oqiwue123}
\widetilde{R}(\mathcal{X}) \leq \frac{1}{d}\log n +   \zeta_d^{\star} - \frac{1}{d} H(f)  + o(1).
\end{align}
Equality is achieved when $\lambda(x) = f(x),\,\forall x\in S$, i.e., when the antenna density matches the user density.

\subsection{A Closed-Form Formula for the Best Per-User Rate}
We can now combine (\ref{oqiwue123}) with (\ref{oiquweoiqwueoiquweoiquweq2}) to obtain the following main result of this section.
\begin{theorem}
Let $r > d$. For large $n$ and $r$, the best per-user rate can be approximated by
\begin{align}
\label{mainformula}
R_n \simeq \left(\frac{r}{d}-1\right)\log n + \log P+ r \zeta_d^{\star} - \frac{r}{d} H(f).
\end{align}
Such a performance is achievable when the density of antenna groups is the same as the user density. 
\end{theorem}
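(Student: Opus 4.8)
The plan is to assemble the formula by chaining the two approximate identities already developed in this section, so the bulk of the ``proof'' is a substitution; the real content has been offloaded into Lemma \ref{latticelemma} and the point-density computation. First I would recall from (\ref{poqwjepqoweq}) that, under the high-path-loss and high-resolution approximations, the best per-user rate satisfies $R_n \simeq r\widetilde{R}_n + \log\frac{P}{n}$, where $\widetilde{R}_n = \max_{\mathcal{X}}\widetilde{R}(\mathcal{X})$. Hence everything reduces to pinning down $\widetilde{R}_n$ for a general user density $f$.

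Second, I would invoke the point-density-function argument culminating in (\ref{oqiwue123}): writing the number of antenna locations in $[x,x+\mathrm{d}x]$ as $n\lambda(x)\mathrm{d}x$ with $\int_S\lambda(x)\mathrm{d}x=1$, the local uniform-case asymptotics (\ref{generalasymptoticformula}) give a conditional rate $\frac{1}{d}\log(n\lambda(x))+\zeta_d^{\star}+o(1)$; averaging against $f$ yields $\widetilde{R}(\mathcal{X})=\frac{1}{d}\log n+\zeta_d^{\star}+\frac{1}{d}\int_S f(x)\log\lambda(x)\mathrm{d}x+o(1)$. Gibbs' inequality bounds $\int_S f\log\lambda \le \int_S f\log f = -H(f)$, with equality exactly when $\lambda = f$, so $\widetilde{R}_n \simeq \frac{1}{d}\log n + \zeta_d^{\star} - \frac{1}{d}H(f)$, the maximizer being the deployment whose point density matches the user density. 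This also establishes the achievability claim in the theorem statement.

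Finally I would substitute this expression into $R_n \simeq r\widetilde{R}_n + \log\frac{P}{n}$ and simplify: the term $\frac{r}{d}\log n$ combines with the $-\log n$ coming from $\log\frac{P}{n}$ to give $\left(\frac{r}{d}-1\right)\log n$, the $\log P$ survives, and the remaining contributions are $r\zeta_d^{\star} - \frac{r}{d}H(f)$, which is precisely (\ref{mainformula}).

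The genuinely delicate part is not this final algebra but the approximation error absorbed into the $\simeq$ symbols: the replacement of the sum of inverse $r$th powers by its largest term (governed by Lemma \ref{latticelemma}, exact only as $r\to\infty$ and only up to an $n$-independent constant for finite $r$), the dropping of the ``$1+$'' inside the logarithm, and the passage from a discrete deployment to a continuous point density in the non-uniform case, which is the standard but heuristic high-resolution argument and here additionally requires the constant $\zeta_d^{\star}$ to control every small cell uniformly. I would therefore present (\ref{mainformula}) explicitly as an asymptotic approximation valid for large $n$ and $r$ rather than an exact identity, and point to the numerical results of Section \ref{secNumerical} as corroboration of its accuracy.
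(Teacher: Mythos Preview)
Your proposal is correct and follows essentially the same approach as the paper: the paper's proof is nothing more than the sentence ``combine (\ref{oqiwue123}) with (\ref{oiquweoiqwueoiquweoiquweq2}),'' and you have simply spelled out that substitution explicitly, together with a careful accounting of the approximation errors hidden in the $\simeq$ symbols. Your additional remarks on the heuristic status of the high-resolution point-density argument and the role of Lemma~\ref{latticelemma} are accurate and appropriate.
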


Even though the approximations to obtain (\ref{mainformula}) are shown to be valid for large $n$ and $r$, numerical results in the next section show that (\ref{mainformula}) is very accurate even for small $n$ and/or $r$. In particular, we can observe that the best per-user rate decreases linearly with the differential entropy of the user density. Since a uniform density maximizes the differential entropy, the worst per-user rate is achieved for a uniform distribution of users. Also, according to (\ref{mainformula}), the maximum achievable macromultiplexing gain of the massive MIMO system is $(\frac{r}{d}-1)^+$, as also claimed in Section \ref{secAchievability}. A formal derivation of (\ref{mainformula}) or its refinements are left as future work. 

\section{Numerical Results}
\label{secNumerical}
In this section, we present numerical simulations that verify our analytical results. We have numerically optimized the per-user rate $R(\mathcal{X})$ in (\ref{actualrakakakeks}) using gradient ascent, and compared the resulting per-user rate with the closed-form formula (\ref{mainformula}).

We present simulation results for the one-dimensional cell $S = [0,1]$ in Fig. \ref{simfig1}. The horizontal axis represents the number of distinct antenna locations $n$. The vertical axis represents the best-possible per-user rate $R_n$. We have considered the three cases $r\in\{2,4,8\}$ for the path loss exponent. Also, we have considered a uniform distribution $f_{1}(u) \triangleq \mathbf{1}(u\in[0,1])$, as well as a Beta distribution $f_{2}(u) \triangleq 30u(1-u)^4 f_1(u)$ of users with $H(f_2) = \frac{35}{12}  - \log 30 = -0.48453\cdots$. The label ``analysis'' refers to our general formula (\ref{mainformula}) for the best per-user rate, while the label ``simulation'' refers to the per-user rate we have obtained through the gradient ascent procedure.

\begin{figure}
\begin{center}
\scalebox{1.2}{\includegraphics{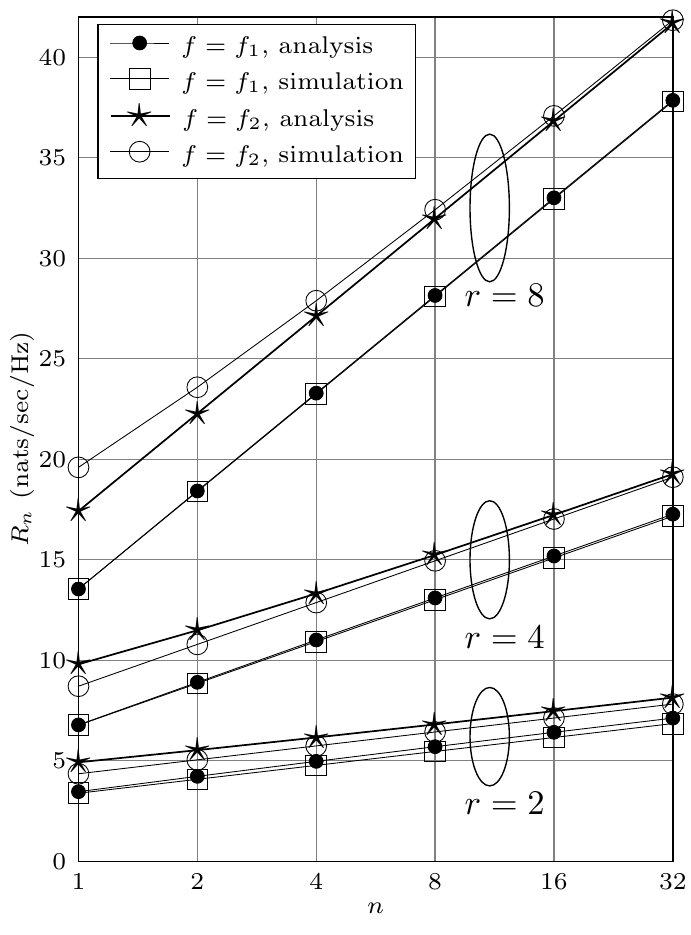}}
\end{center}
\vspace{-10pt}
\caption{Best per-user rates for a one-dimensional cell.}
\label{simfig1}
\end{figure}

Our analysis has suggested that (\ref{mainformula}) will be accurate for large $n$ and $r$. The results of Fig. \ref{simfig1} suggest that our formula (\ref{mainformula}) agrees well with the simulations even for small $n$ and $r$. For example, for $r=2$, the difference between analysis and simulation is at most $0.6$ nats/sec/Hz. For $r\in\{4,8\}$ and a uniform distribution, the analysis is almost an exact match to the simulation with barely noticeable differences. For the Beta distribution of users, the analysis converges to an exact match only as $n$ increases. In general, for non-uniform distributions, a large-enough $n$ is necessary for the accuracy of (\ref{mainformula}) so that the high-resolution arguments in Section \ref{secHighRes2} become valid. Also, in all the $6$ cases, the slope of analysis curves matches their respective simulation curves, indicating that the macromultiplexing gain of the system is indeed $r-1$.

We present simulations results for two-dimensional cells in Fig. \ref{simfig2}. We have considered a uniform distribution $f_3(u) \triangleq  \mathbf{1}(u\in[0,1]^2)$ of users over the square cell $S = [0,1]^2$. In order to demonstrate that (\ref{mainformula}) can also be applied to unbounded cells, we have also considered a Gaussian distribution $f_4(u) = \frac{1}{\pi}e^{-\|u\|^2}$ with $S = \mathbb{R}^2$ and $H(f_4) = \log(e\pi)$. We can observe that, as in the case of $d=1$ in Fig. \ref{simfig1}, the analysis is almost an exact match to the simulation for $r=8$ and a uniform distribution. For $r=8$ and the Gaussian distribution, the analysis correctly predicts the asymptotic behavior of the best per-user rate. In general, compared to the results in Fig. \ref{simfig1} for $d=1$, the gaps between the analysis and the corresponding simulation results are larger for $d=2$. The reason for the larger mismatch is that a point $u$ has potentially more neighboring antenna locations in a higher dimensional cell. As a result, for a given finite $r$,  the approximation (\ref{lattlemma2}) will be off by a larger multiplicative factor as $d$ increases. Still, the simulations agree well with the analysis even when $r$ is as low as $4$, and correctly predict the maximum achievable macromultiplexing gains at any $r$. In particular, according to (\ref{mainformula}), and as can also be observed in Fig. \ref{simfig2}, the maximum macromultiplexing gain for $r=d=2$ is $0$, meaning that the per-user rate will be bounded even as $n\rightarrow\infty$.

\begin{figure}
\begin{center}
\scalebox{1.2}{\includegraphics{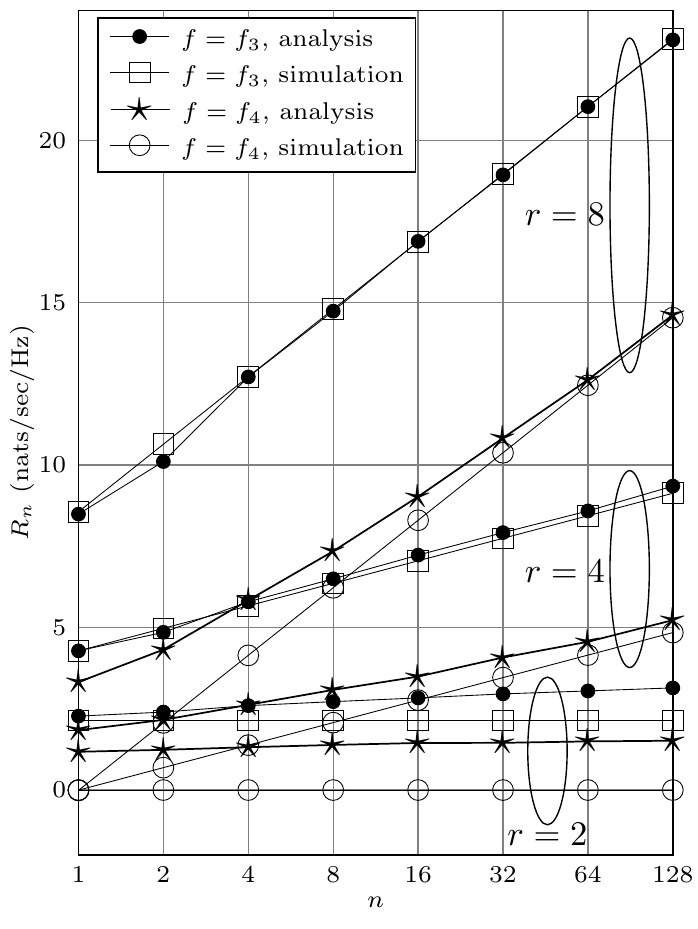}}
\end{center}
\vspace{-10pt}
\caption{Best per-user rates for a two-dimensional cell.}
\label{simfig2}
\end{figure}

\section{Conclusions}
\label{secConclusions}
We have studied a massive MIMO uplink where the base station antennas are evenly distributed to $n$ different locations of the cell. We have shown that a per-user rate of $O(\log n)$ is achievable through optimization of antenna locations. We have characterized the corresponding ``macromultiplexing gains'' in terms of the ambient dimension of the cell and the path loss exponent. Also, using tools from high-resolution quantization theory, we have found an accurate formula for the achievable per-user rate for any distribution of users throughout the cell.


\begin{thebibliography}{1}
\bibitem{marzetta1}
T. L. Marzetta, ``Noncooperative cellular wireless with unlimited numbers of base station antennas,'' \emph{IEEE Trans. Wireless Commun.}, vol. 9, no. 11, pp. 3590--3600, Nov. 2010.
\bibitem{swindletutorial}
L. Lu, G. Y. Li, A. L. Swindlehurst, A. Ashikhmin, and R. Zhang, ``An overview of massive MIMO: Benefits and challenges.'' \emph{IEEE J. Select. Areas Signal Process.}, vol. 8, no. 5, Oct. 2014.
\bibitem{saleh_dist_antennas}
A. A. M. Saleh, A. J. Rustako, Jr., and R. S. Roman, ``Distributed antennas for indoor radio communications,'' \emph{IEEE Trans. Commun.}, vol. COM-35, no. 12, pp. 1245--1251, Dec. 1987.
\bibitem{kjkerpez}
K. J. Kerpez, ``A radio access system with distributed antennas,'' \emph{IEEE Trans. Veh. Tech.}, vol. 45, no. 2, pp. 265--275, May. 1996.
\bibitem{wanchoi1}
W. Choi, J. G. Andrews, ``Downlink performance and capacity of distributed antenna systems in a multicell environment,'' \emph{IEEE Trans. Wireless Commun.}, vol. 6, no. 1, pp. 69--73, Jan. 2007.
\bibitem{j_gan_circular_das2}
J. Gan, S. Zhou, J. Wang, and K. Park, ``On the sum-rate capacity of multi-user distributed antenna system with circular antenna layout,'' \emph{IEICE Trans. Commun.}, vol. E89-B, no. 9, pp. 2612--2616, Sept. 2006.
\bibitem{j_gan_circular_das}
J. Gan, Y. Li, L. Xiao, S. Zhou, and J. Wang, ``On sum rate and power consumption of multi-user distributed antenna system with circular antenna layout,'' \emph{EURASIP J. Wireless Commun. Networking}, vol. 2007, Article ID 89780.
\bibitem{yindi_jing_circular_das}
A. Yang, Y. Jing, C. Xing, Z. Fei, and J. Kuang, ``Performance analysis and location optimization for massive MIMO systems with circularly distributed antennas,'' \emph{IEEE Trans. Wireless Commun.}, vol. 14, no. 10, pp. 5659--5671, Oct. 2015.
\bibitem{kamga1}
G. N. Kamga, M. Xia, and S. Aissa, ``Spectral-efficiency analysis of massive MIMO systems in centralized and distributed schemes,'' \emph{accepted to IEEE Trans. Commun.}, Jan. 2016.
\bibitem{wang_ant_loc_opt}
X. Wang, P. Zhu, and M. Chen, ``Antenna location design for generalized distributed antenna systems,'' \emph{IEEE Commun. Letters}, vol. 13, no. 5, pp. 315--317, May 2009.
\bibitem{wang_ant_loc_opt2}
Y. Qian, M. Chen, X. Wang, and P. Zhu, ``Antenna location design for distributed antenna systems with selective transmission,'' \emph{Intl. Conf. Wireless Commun. Signal Process.}, Nov. 2009.
\bibitem{firouz1}
S. Firouzabadi and A. Goldsmith, ``Optimal placement of distributed antennas in cellular systems,'' \emph{IEEE Intl. Workshop on Signal Process. Advances in Wireless Commun.}, June 2011.
\bibitem{circular_das_with_central_antenna}
E. Park, S.-R. Lee, and I. Lee, ``Antenna placement optimization for distributed antenna systems,'' \emph{IEEE Trans. Wireless Commun.}, vol. 11, no. 7,  pp. 2468--2477, July 2012.
\bibitem{lin_dai}
L. Dai, ``A comparative study on uplink sum capacity with co-located and distributed antennas,'' \emph{IEEE J. Select. Areas Commun.}, vol. 29, no. 6, pp. 1200--1213, June 2011. 
\bibitem{liu_antenna}
J. Liu, J. Wu, G. Wei, and W. Li, ``Antenna location optimization for hybrid user distribution,'' \emph{IEEE Intl. Conf. Commun. Workshop}, June 2015.
\bibitem{zhang} 
J. Zhang, C.-K. Wen, S. Jin, X. Gao, and K.-K. Wong, ``On capacity
of large-scale MIMO multiple access channels with distributed sets
of correlated antennas,'' \emph{IEEE J. Sel. Areas Commun.}, vol. 31, no. 2,
pp. 133--148, Feb. 2013.
\bibitem{massivemimotechrep}
E. Koyuncu, ``Performance gains of optimal antenna deployment for massive MIMO systems,'' \emph{Tech. Rep.}, Apr. 2017. [Online] Available: https://webfiles.uci.edu/ekoyuncu/alotr.pdf
\bibitem{quantization}
R. M. Gray and D. L. Neuhoff, ``Quantization," \emph{IEEE Trans. Inf. Theory}, vol. 44, no. 6, pp. 2325--2383, Oct. 1998.
\bibitem{zador1}
P. L. Zador, ``Asymptotic quantization error of continuous signals and the
quantization dimension,'' \emph{IEEE Trans. Inf. Theory}, vol. 28, no. 2, pp.
139--148, Mar. 1982.
\end{thebibliography}
\end{document}